\documentclass[12pt]{article}

\usepackage[affil-it]{authblk}

\usepackage{enumerate}

\usepackage{tikz}
\usetikzlibrary{chains}
\usetikzlibrary{decorations.pathreplacing}

\usepackage[margin=1in]{geometry}

\usepackage{amsmath,amsfonts,amssymb,times,latexsym}
\parindent=0pt
\parskip=10pt

\usepackage{amsthm}
\theoremstyle{plain}
\newtheorem{theorem}{Theorem}
\newtheorem{lemma}[theorem]{Lemma}
\newtheorem{proposition}[theorem]{Proposition}

\theoremstyle{definition}
\newtheorem{definition}[theorem]{Definition}

\newcommand{\abs}[1]{\left\lvert#1\right\rvert}
\newcommand{\sturmian}[1]{\textbf{c}_{#1}}
\newcommand{\morphism}[1]{\varphi_{#1}}
\newcommand{\ostrowski}[2]{\textsc{OR}_{#1}(#2)}

\begin{document}

\title{Ostrowski Numeration and the Local Period of Sturmian Words}

\author{Luke Schaeffer}
\affil{School of Computer Science \\
University of Waterloo \\
Waterloo, ON  N2L 3G1 Canada \\
\texttt{l3schaef@cs.uwaterloo.ca}}


\maketitle

\begin{abstract}
We show that the local period at position $n$ in a characteristic Sturmian word can be given in terms of the Ostrowski representation for $n+1$. 
\end{abstract}

\section{Introduction}

We consider characteristic Sturmian words, which are infinite words over $\{ 0, 1 \}$ such that the $i$th character is 
$$
\lfloor \alpha(i+1) \rfloor - \lfloor \alpha i \rfloor - \lfloor \alpha \rfloor
$$
for some irrational $\alpha$. We give an alternate definition later better suited to our purposes. Let $f_{w}(n)$ denote the number of factors of length $n$ in $w$, also known as the \emph{subword complexity} of $O(n)$. It is well-known that $f_{w}(n) = n+1$ when $w$ is a Sturmian word. On the other hand, the Coven-Hedlund theorem \cite{covenhedlund} states that $f_{w}(n)$ is either bounded or $f_{w}(n) \geq n+1$ for all $n$. In this sense, Sturmian words are extremal with respect to subword complexity. 

In a recent paper \cite{restivomignosi}, Restivo and Mignosi show that characteristic Sturmian words are also extremal with respect to local period, which we define shortly as part of Definition~\ref{def:localperiod}. Let $p_{w}(n)$ denote the local period of a word $w$ at position $n$. The critical factorization theorem states that either $p_{w}(n)$ is bounded or $p_{w}(n) \geq n+1$ for infinitely many $n$. Restivo and Mignosi show that when $w$ is a characteristic Sturmian word, $p_{w}(n)$ is at most $n+1$ and $p_{w}(n) = n+1$ infinitely often. Hence, characteristic Sturmian words also have extremal local periods. 

Unlike subword complexity, the local period function $p_{w}(n)$ is erratic. Consider Table~\ref{tab:fibperiod}, which gives the local period at points in $F$, the Fibonacci word. 
\begin{table}[h]
\begin{center}
\begin{tabular}{|c|ccccccccccccccccccccc|}
\hline
$n$        & 0 & 1 & 2 & 3 & 4 & 5 & 6 & 7 & 8 & 9 & 10 & 11 & 12 & 13 & 14 & 15 & 16 & 17 & 18 & 
19 & 20 \\
\hline
$p_{F}(n)$ & 1 & 2 & 3 & 1 & 5 & 2 & 2 & 8 & 1 & 3 &  3 &  1 & 13 &  2 &  2 &  5 &  1 &  5 &  2 & 
 2 & 21 \\
\hline
\end{tabular}
\end{center}
\caption{The local period function for the Fibonacci word.}
\label{tab:fibperiod}
\end{table}
Although there are patterns in the table (for example, each $p_{F}(n)$ is a Fibonacci number), it is not obvious how $p_{F}(n)$ is related to $n$ in general. Shallit \cite{shallit} showed that $p_{F}(n)$ is easily computed from the Zeckendorf representation of $n+1$, and conjectured that for a general characteristic Sturmian word $w$, $p_{w}(n)$ is a simple function of the corresponding Ostrowski representation for $n+1$. In this paper, we confirm Shallit's conjecture by describing $p_{w}(n)$ in terms of the Ostrowski representation for $n+1$. 

\section{Notation}

Let $\Sigma := \{ 0, 1 \}$ for the rest of this paper. We write $w[n]$ to denote the $n$th letter of a word $w$ (finite or infinite), and $w[i..j]$ for the factor $w[i] w[i+1] \cdots w[j-1] w[j]$. We use the convention that the first character in $w$ is $w[0]$. Let $\abs{w}$ denote the length of a finite word $w$. 

\subsection{Repetition words}

\begin{definition}
Let $w$ be an infinite word over a finite alphabet $\Sigma$. A \emph{repetition word in $w$ at position $i$} is a non-empty factor $w[i..j]$ such that either $w[i..j]$ is a prefix of $w[0..i-1]$ or $w[0..i-1]$ is a prefix of $w[i..j]$.
\end{definition}
If the infinite word $w$ is recurrent (i.e., every factor in $w$ occurs more than once in $w$) then every factor occurs infinitely many times. In particular, for every $i$ the prefix $w[0..i-1]$ occurs in $w[i..\infty]$, so there exists a repetition word at every position in a recurrent word. 
\begin{definition}
\label{def:localperiod}
Let $w$ be an infinite recurrent word over a finite alphabet $\Sigma$. Let $r_{w}(i)$ denote the shortest repetition word in $w$ at position $i$. The length of the shortest repetition word, denoted by $p_{w}(i) := \abs{r_{w}(i)}$, is called the \emph{local period in $w$ at position $i$}. 
\end{definition}
We note that Sturmian words are recurrent, so $p_{w}(i)$ and $r_{w}(i)$ exist at every position for a characteristic Sturmian word $w$. We omit further discussion of the existence of $p_{w}(i)$ and $r_{w}(i)$. 

For example, consider the Fibonacci word $F$ shown in Figure~\ref{fig:repetition}. The factors $F[5..6] = 01$, $F[5..9] = 01001$ and $F[5..17] = 0100100101001$ are examples of repetition words in the Fibonacci word at position 5. The shortest repetition word at position $5$ is $r_{F}(5) = F[5..6] = 01$ and therefore the local period at position 5 is $p_{F}(5) = 2$. 

\begin{figure}[h]
\begin{center}
\begin{tikzpicture}
\small
\edef\sizetape{5.5mm}
\tikzstyle{tape}=[draw,minimum size=\sizetape]
\tikzstyle{nums}=[minimum size=\sizetape,font=\scriptsize]

\begin{scope}[start chain=1 going right, node distance=-0.15mm]
		\node [on chain=1] {$F = $};
    \node [on chain=1,tape] (start) {0};
    \node [on chain=1,tape] {1};
    \node [on chain=1,tape] {0};
    \node [on chain=1,tape] {0};
    \node [on chain=1,tape] (prev) {1};
    \node [on chain=1,tape,xshift=0.5mm] (next) {0};
    \node [on chain=1,tape] {1};
    \node [on chain=1,tape] {0};
    \node [on chain=1,tape] {0};
    \node [on chain=1,tape] {1};
    \node [on chain=1,tape] {0};
    \node [on chain=1,tape] {0};
    \node [on chain=1,tape] {1};
    \node [on chain=1,tape] {0};
    \node [on chain=1,tape] {1};
    \node [on chain=1,tape] {0};
    \node [on chain=1,tape] {0};
    \node [on chain=1,tape] {1};
    \node [on chain=1,tape] {0};
    \node [on chain=1,tape] {1};
    \node [on chain=1,tape] {0};
    \node [on chain=1,tape,draw=none] {$\ldots$};
\end{scope}

\begin{scope}[start chain=1 going right, node distance=-0.15mm]
    \node [on chain=1,nums,yshift=6mm] at (start) {0};
    \node [on chain=1,nums] {1};
    \node [on chain=1,nums] {2};
    \node [on chain=1,nums] {3};
    \node [on chain=1,nums] {4};
    \node [on chain=1,nums,xshift=0.5mm] {5};
    \node [on chain=1,nums] {6};
    \node [on chain=1,nums] {7};
    \node [on chain=1,nums] {8};
    \node [on chain=1,nums] {9};
    \node [on chain=1,nums] {10};
    \node [on chain=1,nums] {11};
    \node [on chain=1,nums] {12};
    \node [on chain=1,nums] {13};
    \node [on chain=1,nums] {14};
    \node [on chain=1,nums] {15};
    \node [on chain=1,nums] {16};
    \node [on chain=1,nums] {17};
    \node [on chain=1,nums] {18};
    \node [on chain=1,nums] {19};
    \node [on chain=1,nums] {20};
\end{scope}

\begin{scope}[start chain=1 going left, node distance=-0.15mm]
    \node [on chain=1,tape,yshift=-6mm] at (prev) {1};
    \node [on chain=1,tape] {0};
\end{scope}
\begin{scope}[start chain=1 going right, node distance=-0.15mm]
    \node [on chain=1,tape,yshift=-6mm] at (next) {0};
    \node [on chain=1,tape] {1};
\end{scope}

\begin{scope}[start chain=1 going left, node distance=-0.15mm]
    \node [on chain=1,tape,yshift=-12mm] at (prev) {1};
    \node [on chain=1,tape] {0};
    \node [on chain=1,tape] {0};
    \node [on chain=1,tape] {1};
    \node [on chain=1,tape] {0};
\end{scope}
\begin{scope}[start chain=1 going right, node distance=-0.15mm]
    \node [on chain=1,tape,yshift=-12mm] at (next) {0};
    \node [on chain=1,tape] {1};
    \node [on chain=1,tape] {0};
    \node [on chain=1,tape] {0};
    \node [on chain=1,tape] {1};
\end{scope}

\begin{scope}[start chain=1 going left, node distance=-0.15mm]
    \node [on chain=1,tape,yshift=-18mm] at (prev) {1};
    \node [on chain=1,tape] {0};
    \node [on chain=1,tape] {0};
    \node [on chain=1,tape] {1};
    \node [on chain=1,tape] {0};
    \node [on chain=1,tape] {1};
    \node [on chain=1,tape] {0};
    \node [on chain=1,tape] {0};
    \node [on chain=1,tape] {1};
    \node [on chain=1,tape] {0};
    \node [on chain=1,tape] {0};
    \node [on chain=1,tape] {1};
    \node [on chain=1,tape] {0};
\end{scope}
\begin{scope}[start chain=1 going right, node distance=-0.15mm]
    \node [on chain=1,tape,yshift=-18mm] at (next) {0};
    \node [on chain=1,tape] {1};
    \node [on chain=1,tape] {0};
    \node [on chain=1,tape] {0};
    \node [on chain=1,tape] {1};
    \node [on chain=1,tape] {0};
    \node [on chain=1,tape] {0};
    \node [on chain=1,tape] {1};
    \node [on chain=1,tape] {0};
    \node [on chain=1,tape] {1};
    \node [on chain=1,tape] {0};
    \node [on chain=1,tape] {0};
    \node [on chain=1,tape] {1};
\end{scope}
\end{tikzpicture}
\end{center}
\caption{The Fibonacci word $F$ and some repetition words at position $5$}
\label{fig:repetition}
\end{figure}

\section{Characteristic Sturmian Words and the Ostrowski Representation}

We define characteristic Sturmian words and the Ostrowski representation based on directive sequences of integers, defined below. For every directive sequence there is a corresponding characteristic Sturmian word. Similarly, for each directive sequence there is an Ostrowski representation associating nonnegative integers with strings. 
\begin{definition}
A \emph{directive sequence} $\alpha = \{ a_i \}_{i=0}^{\infty}$ is a sequence of nonnegative integers, where $a_i > 0$ for all $i > 0$. 
\end{definition}
Directive sequences are in some sense infinite words over the natural numbers, so we use the same indexing/factor notation. The notation $\alpha[i]$ indicates the $i$th term, $a_i$. We will frequently separate a directive sequence $\alpha$ into the first term, $\alpha[0]$, and the rest of the sequence, $\alpha[1..\infty]$.

Note that our definitions for characteristic Sturmian words and Ostrowski representations deviate slightly from the definitions given in our references, \cite{shallitallouche} and \cite{berstel}. Specifically, there are two main differences between our definition and \cite{shallitallouche}:
\begin{enumerate}
\item We start indexing the directive sequence at zero instead of one. 
\item The first term is interpreted differently. For example, if the first term in the sequence $a$ then our characteristic Sturmian word begins with $0^a 1$, whereas the characteristic Sturmian word in \cite{shallitallouche} begins with $0^{a-1} 1$. 
\end{enumerate}
In other words, we are describing the same mathematical objects, but label them with slightly different directive sequences. Any result that does not explicitly reference the terms of the directive sequence will be true for either set of definitions. This includes our main result, Theorem~\ref{theorem:main}.

\subsection{Characteristic Sturmian Words}

Consider the following collection of morphisms.
\begin{definition}
For each $k \geq 0$, we define a morphism $\morphism{k} \colon \Sigma^{*} \rightarrow \Sigma^{*}$ such that
\begin{align*}
\morphism{k}(0) &= 0^{k} 1 \\
\morphism{k}(1) &= 0
\end{align*} 
for all $k \geq 0$.
\end{definition}
Given a directive sequence, we use this collection of morphisms to construct a sequence of words. 
\begin{definition}
Let $\alpha$ be a directive sequence. We define a sequence of finite words $\{ X_i \}_{i=0}^{\infty}$ over $\Sigma$ where 
\begin{align*}
X_n &= (\morphism{\alpha[0]} \circ \cdots \circ \morphism{\alpha[n-1]})(0).
\end{align*}
We call $\{ X_i \}_{i=0}^{\infty}$ the \emph{standard sequence}, and we say $X_i$ is the \emph{$i$th characteristic block}. 
\end{definition}
Sometimes the characteristic blocks are defined recursively as follows. 
\begin{proposition}
\label{prop:recurse}
Let $\alpha$ be a directive sequence and let $\{ X_i \}_{i=0}^{\infty}$ be the corresponding directive sequence. Then
\begin{align*}
X_{n} &= \begin{cases}
0, & \text{if $n = 0$;} \\
0^{\alpha[0]} 1, & \text{if $n = 1$;} \\
X_{n-1}^{\alpha[n-1]} X_{n-2}, & \text{if $n \geq 2$.}
\end{cases}
\end{align*}
\end{proposition}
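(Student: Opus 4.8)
The plan is to work directly from the definition $X_n = (\morphism{\alpha[0]} \circ \cdots \circ \morphism{\alpha[n-1]})(0)$, exploiting the fact that each $\morphism{k}$ is a morphism and hence commutes with concatenation. For brevity I would write $\psi_n := \morphism{\alpha[0]} \circ \cdots \circ \morphism{\alpha[n-1]}$, so that $X_n = \psi_n(0)$ and $\psi_0$ is the empty composition, i.e.\ the identity. The cases $n = 0$ and $n = 1$ are then immediate by direct computation: $\psi_0(0) = 0$ and $\psi_1(0) = \morphism{\alpha[0]}(0) = 0^{\alpha[0]} 1$, matching the two base cases.

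The key auxiliary fact I would establish is that $\psi_n(1) = X_{n-1}$ for every $n \geq 1$. To prove it, peel off the innermost (rightmost) morphism: since $\psi_n = \psi_{n-1} \circ \morphism{\alpha[n-1]}$ and $\morphism{\alpha[n-1]}(1) = 0$, we get $\psi_n(1) = \psi_{n-1}\big(\morphism{\alpha[n-1]}(1)\big) = \psi_{n-1}(0) = X_{n-1}$. At the boundary $n = 1$ this reads $\psi_1(1) = \morphism{\alpha[0]}(1) = 0 = X_0$, so the identity is valid all the way down.

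With this in hand, the recursion for $n \geq 2$ follows by peeling off the innermost morphism once more and applying the outer morphism $\psi_{n-1}$ letter by letter:
\begin{align*}
X_n &= \psi_n(0) = \psi_{n-1}\big(\morphism{\alpha[n-1]}(0)\big) = \psi_{n-1}\big(0^{\alpha[n-1]} 1\big) \\
&= \psi_{n-1}(0)^{\alpha[n-1]}\, \psi_{n-1}(1) = X_{n-1}^{\alpha[n-1]} X_{n-2},
\end{align*}
where the fourth equality uses that $\psi_{n-1}$ is a morphism and the final equality uses the auxiliary fact with $n$ replaced by $n-1$ (valid since $n-1 \geq 1$).

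I expect this to be essentially routine; there is no substantial obstacle. The only points requiring care are the bookkeeping of composition order (making sure $\morphism{\alpha[n-1]}$ is the morphism applied \emph{first} to the seed letter $0$, so that it is the one peeled off) and confirming the auxiliary identity holds down to its boundary $n = 1$, as checked above. One could alternatively phrase the whole argument as a single induction on $n$, but since the morphism structure does all the work, the direct computation seems cleanest.
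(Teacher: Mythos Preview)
Your argument is correct. The paper itself does not give a proof of this proposition: it simply cites Theorem~9.1.8 in Allouche--Shallit, noting the indexing discrepancy. Your direct computation from the definition $X_n = (\morphism{\alpha[0]} \circ \cdots \circ \morphism{\alpha[n-1]})(0)$, using the auxiliary identity $\psi_n(1) = X_{n-1}$, is a clean self-contained proof that avoids the need to reconcile conventions with an external reference. The only thing to watch is exactly what you flagged---the composition order---and you have it right: $\morphism{\alpha[n-1]}$ is applied first to the seed letter, so it is the one peeled off.
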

\begin{proof}
See Theorem 9.1.8 in \cite{shallitallouche}. Note that due to a difference in definitions, the authors number the directive sequence starting from one instead of zero, and they treat the first term differently (i.e., they define $X_1$ as $0^{a_1 - 1} 1$ instead of $0^{\alpha[0]} 1$).
\end{proof}

It follows from the proposition that $X_{n-1}$ is a prefix of $X_{n}$ for each $n \geq 2$, and therefore the limit $\lim_{n \rightarrow \infty} X_n$ exists. We define $\sturmian{\alpha}$, the characteristic Sturmian word corresponding to the directive sequence $\alpha$, to be this limit. 
$$
\sturmian{\alpha} := \lim_{n \rightarrow \infty} X_n.
$$
Then $X_n$ is a prefix of $\sturmian{\alpha}$ for each $n \geq 2$. 

There is a simple relationship between $\sturmian{\alpha}$, $\alpha[0]$ and $\sturmian{\alpha[1..\infty]}$, given in the following proposition.
\begin{proposition}
\label{prop:image}
Let $\alpha$ be a directive sequence, and let $\beta := \alpha[1..\infty]$. Then 
\begin{align*}
\sturmian{\alpha} &= \morphism{\alpha[0]} \left( \sturmian{\beta} \right)
\end{align*}
\end{proposition} 
\begin{proof}{(Sketch)}
We factor $\morphism{\alpha[0]}$ out of each $X_i$ and then out of the limit. 
$$
\sturmian{\alpha} = \lim_{n \rightarrow \infty} (\morphism{\alpha[0]} \circ \cdots \circ \morphism{\alpha[n-1]})(0) = \morphism{\alpha[0]} \left( \lim_{n \rightarrow \infty} (\morphism{\alpha[1]} \circ \cdots \circ \morphism{\alpha[n-1]})(0) \right) = \morphism{\alpha[0]} \left( \sturmian{\beta} \right).
$$
Alternatively, see Theorem 9.1.8 in \cite{shallitallouche} for a similar result. 
\end{proof}
Notice that if $\alpha[0] = 0$ then $\sturmian{\alpha}$ and $\sturmian{\beta}$ are the same infinite word up to permutation of the alphabet, since $\morphism{0}$ swaps 0 and 1. Permuting the alphabet does not affect the local period or repetition words, so henceforth we assume that the first term of any directive sequence is positive (and therefore all terms are positive). Consequently, all characteristic Sturmian words we consider will start with 0 and avoid the factor $11$. 

Let us give an example of a characteristic Sturmian word. Consider the directive sequence $\alpha$ beginning $1,3,2,2$. Then we can compute the first five terms of the standard sequence 
\begin{align*}
X_0 &= 0 \\
X_1 &= 01 \\
X_2 &= 0101010 \\
X_3 &= 0101010010101001 \\
X_4 &= 010101001010100101010100101010010101010.
\end{align*}
We know $X_4$ is a prefix of $\sturmian{\alpha}$, so we can deduce the first $\abs{X_4} = 39$ characters of $\sturmian{\alpha}$. Thus,
\begin{align*}
\sturmian{\alpha} &= 010101001010100101010100101010010101010 \cdots
\end{align*}
By Proposition~\ref{prop:image}, $\sturmian{\alpha}$ is equal to $\morphism{1}(\sturmian{\alpha[1..\infty]})$. 
\begin{align*}
\sturmian{\alpha} &= 01 \, 01 \, 01 \, 0 \, 01 \, 01 \, 01 \, 0 \, 01 \, 01 \, 01 \, 01 \, 0 \, 01 \, 01 \, 01 \, 0 \, 01 \, 01 \, 01 \, 01 \, 0 \cdots \\
&= \morphism{1}( 0 0 0 1 0 0 0 1 0 0 0 0 1 0 0 0 1 0 0 0 0 1 \cdots ).
\end{align*}

\subsection{Ostrowski representation}

For each directive sequence $\alpha$, there is a corresponding characteristic Sturmian word $\sturmian{\alpha}$. For each characteristic Sturmian word there is a numeration system, the Ostrowski representation, which is closely related to the standard sequence. For example, if the directive sequence is $\alpha = 1, 1, 1, \ldots$ then $\sturmian{\alpha}$ is $F$, the Fibonacci word. The Ostrowski representation for $\alpha = 1,1,1,\ldots$ is the Zeckendorf representation, where we write an integer as a sum of Fibonacci numbers. See chapter three in \cite{shallitallouche} for a description of these numeration systems, but note that their definition of Ostrowski representation differs from our definition.

\begin{definition}
Let $\alpha$ be a directive sequence, and let $\{ X_i \}_{i=0}^{\infty}$ be the corresponding standard sequence. Define an integer sequence $\{ q_{i} \}_{i=0}^{\infty}$ where $q_{i} = \abs{X_i}$ for all $i \geq 0$. Let $n \geq 0$ be an integer. An \emph{$\alpha$-Ostrowski representation} (or simply \emph{Ostrowski representation} when $\alpha$ is understood) for $n$ is a sequence of non-negative integers $\{ d_i \}_{i=0}^{\infty}$ such that 
\begin{enumerate}
\item Only finitely many $d_i$ are nonzero. 
\item $n = \sum_{i} d_i q_i$
\item $0 \leq d_i \leq \alpha[i]$ for all $i \geq 0$. 
\item If $d_{i} = \alpha[i]$ then $d_{i-1} = 0$ for all $i \geq 1$. 
\end{enumerate}
\end{definition}
Note that by Proposition~\ref{prop:recurse}, we can also generate $\{ q_{i} \}_{i=0}^{\infty}$ directly from $\alpha$ using the following recurrence 
\begin{align*}
q_{n} &= \begin{cases}
1, & \text{if $n = 0$;} \\
\alpha[0] + 1, & \text{if $n = 1$;} \\
q_{n-1} \alpha[n-1] + q_{n-2}, & \text{if $n \geq 2$.}
\end{cases}
\end{align*}

It is well-known that for any given directive sequence, there is a unique Ostrowski representation, which we denote $\ostrowski{\alpha}{n}$, for every non-negative integer  \cite{shallitallouche}. Also note that formally $\ostrowski{\alpha}{n}$ is an infinite sequence $\{ d_i \}_{i=0}^{\infty}$, but we often write the terms up to the last nonzero term, e.g., $d_{k} d_{k-1} \cdots d_{1} d_{0}$, with the understanding that $d_{i} = 0$ for $i > k$. This is analogous to decimal representation of integers, where we write the least significant digit last and omit leading zeros.

\begin{theorem}
\label{thm:decomp}
Let $\alpha$ be a directive sequence. Let $n \geq 0$ be an integer, and let $d_{k} d_{k-1} \cdots d_{1} d_{0}$ be an Ostrowski representation for $n$. Then
$$
w := X_{k}^{d_{k}} X_{k-1}^{d_{k-1}} \cdots X_{1}^{d_1} X_{0}^{d_0}
$$
is a proper prefix of $X_{k+1}$, and therefore $w$ is a prefix of $\sturmian{\alpha}$. Since $\abs{w} = \sum_{i} d_{k} \abs{X_i} = n$, it follows that $w = \sturmian{\alpha}[0..n-1]$. 
\end{theorem}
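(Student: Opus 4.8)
The plan is to induct on $k$, the index of the leading digit of the Ostrowski representation, using the recursive description of the characteristic blocks from Proposition~\ref{prop:recurse}, namely $X_{k+1} = X_{k}^{\alpha[k]} X_{k-1}$, together with the prefix relations among the $X_i$ that follow from it. Since $X_{k+1}$ begins with $X_{k}^{\alpha[k]}$ and ends with $X_{k-1}$, the whole argument turns on how the leading digit $d_k$ compares to its maximal allowed value $\alpha[k]$, and I expect condition (4) of the Ostrowski representation to be the load-bearing hypothesis.

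For the base case $k = 0$ I would simply observe that $w = X_0^{d_0} = 0^{d_0}$ with $0 \le d_0 \le \alpha[0]$, so $w$ is a prefix of $0^{\alpha[0]}$, which is a proper prefix of $X_1 = 0^{\alpha[0]} 1$ because of the trailing $1$. For the inductive step with $k \ge 1$ I would split on the size of $d_k$. If $d_k < \alpha[k]$, the tail $d_{k-1} \cdots d_0$ is again a valid Ostrowski representation, so by the induction hypothesis the word $w' := X_{k-1}^{d_{k-1}} \cdots X_0^{d_0}$ is a proper prefix of $X_k$; prepending $X_k^{d_k}$ shows $w = X_k^{d_k} w'$ is a proper prefix of $X_k^{d_k + 1}$, and since $d_k + 1 \le \alpha[k]$ this is a prefix of $X_k^{\alpha[k]}$ and hence of $X_{k+1}$. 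As a proper prefix of a prefix of $X_{k+1}$, the word $w$ is a proper prefix of $X_{k+1}$.

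The remaining case is $d_k = \alpha[k]$. Here condition (4) forces $d_{k-1} = 0$, so the tail reduces to $w'' := X_{k-2}^{d_{k-2}} \cdots X_0^{d_0}$ (the empty word when $k = 1$). Applying the induction hypothesis at level $k-2$ when $k \ge 2$, or using that the empty word is a proper prefix of $X_0 = 0$ when $k = 1$, gives that $w''$ is a proper prefix of $X_{k-1}$. Then $w = X_k^{\alpha[k]} w''$ is a proper prefix of $X_k^{\alpha[k]} X_{k-1} = X_{k+1}$, as desired. Because this case appeals to the hypothesis at level $k-2$ rather than $k-1$, I would phrase the whole argument as a strong induction, with $k = 0$ and the empty-tail situation serving as base cases.

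I anticipate that the only genuine subtlety is this two-step drop in the index: the recursion $X_{k+1} = X_k^{\alpha[k]} X_{k-1}$ mixes blocks of levels $k$ and $k-1$, so when the leading digit is saturated one must descend to level $k-2$, and it is precisely condition (4) that guarantees $d_{k-1} = 0$ and thereby prevents $w$ from overrunning $X_{k+1}$. Once the case split is organized around $d_k < \alpha[k]$ versus $d_k = \alpha[k]$, the bookkeeping for properness is routine, and the final claims $\abs{w} = n$ and $w = \sturmian{\alpha}[0..n-1]$ follow immediately from $q_i = \abs{X_i}$ together with the fact that each $X_{k+1}$ is a prefix of $\sturmian{\alpha}$.
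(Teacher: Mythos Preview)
Your argument is correct and complete. The paper itself does not supply a proof of this theorem; it simply cites Theorem~9.1.13 of Allouche and Shallit's \textit{Automatic Sequences}. Your strong induction on $k$, splitting on whether the leading digit $d_k$ is saturated and invoking condition~(4) of the Ostrowski representation to drop to level $k-2$ when it is, is exactly the natural self-contained proof, and the bookkeeping you outline for properness (proper prefix of $X_k^{d_k+1}$ in the unsaturated case, proper prefix of $X_k^{\alpha[k]}X_{k-1}$ in the saturated case) goes through without issue.
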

\begin{proof}
This is essentially Theorem 9.1.13 in \cite{shallitallouche}.
\end{proof}

The following technical lemma relates Ostrowski representations for $\alpha$ and $\alpha[1..\infty]$, in much the same way that Proposition~\ref{prop:image} relates $\sturmian{\alpha}$ to $\sturmian{\alpha[1..\infty]}$. 
\begin{lemma}
\label{lemma:image}
Let $\alpha$ be a directive sequence and define $\beta := \alpha[1..\infty]$. Let $n \geq 0$ be an integer with Ostrowski representation $\ostrowski{\alpha}{n} = d_k \cdots d_{0}$. Then there exists an integer $m \geq 0$ such that $\ostrowski{\beta}{m} = d_{k} \cdots d_{1}$ and
$$
\sturmian{\alpha}[0..n-1] = \morphism{\alpha[0]}(\sturmian{\beta}[0..m-1]) 0^{d_0}.
$$
Furthermore, if $d_0 > 0$ then $\sturmian{\beta}[m] = 0$. 
\end{lemma}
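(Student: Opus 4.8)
The plan is to pass through the morphism $\morphism{\alpha[0]}$, connecting the standard sequence and the Ostrowski representation of $\alpha$ with those of $\beta$, and then to read off the single character $\sturmian{\beta}[m]$ from the block structure of $\sturmian{\beta}$.

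First I would introduce the standard sequence $\{Y_i\}$ for $\beta$ and record the identity $\morphism{\alpha[0]}(Y_i) = X_{i+1}$ for all $i \ge 0$, which is immediate from the definition of the characteristic blocks since $Y_i = (\morphism{\alpha[1]} \circ \cdots \circ \morphism{\alpha[i]})(0)$. Next I would define $m$ by reinterpreting the digit string $d_k \cdots d_1$ as a $\beta$-Ostrowski representation, i.e.\ $m := \sum_{i=1}^{k} d_i \abs{Y_{i-1}}$. The four defining conditions for this string follow directly from the corresponding conditions for $\ostrowski{\alpha}{n}$ after shifting indices: the bound $0 \le d_{j+1} \le \beta[j]$ is condition~3 for $\alpha$ at index $j+1$ (using $\beta[j]=\alpha[j+1]$), and the carry rule at $\beta$-index $j \ge 1$ is condition~4 for $\alpha$ at index $j+1 \ge 2$. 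By uniqueness of the Ostrowski representation, $\ostrowski{\beta}{m} = d_k \cdots d_1$.

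To obtain the displayed equation I would apply Theorem~\ref{thm:decomp} twice. For $\beta$ it gives $\sturmian{\beta}[0..m-1] = Y_{k-1}^{d_k} \cdots Y_1^{d_2} Y_0^{d_1}$; applying $\morphism{\alpha[0]}$, which commutes with concatenation and powers, and using $\morphism{\alpha[0]}(Y_i)=X_{i+1}$ turns this into $\morphism{\alpha[0]}(\sturmian{\beta}[0..m-1]) = X_k^{d_k} \cdots X_1^{d_1}$. Since $X_0 = 0$, appending $0^{d_0} = X_0^{d_0}$ reproduces exactly the decomposition $X_k^{d_k}\cdots X_1^{d_1}X_0^{d_0}$ that Theorem~\ref{thm:decomp} gives for $\sturmian{\alpha}[0..n-1]$, which yields the claimed identity.

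The main obstacle is the final sentence, because the identity just proved does not by itself determine $\sturmian{\beta}[m]$: whether $\sturmian{\beta}[m]$ equals $0$ or $1$, the following block of $\sturmian{\alpha} = \morphism{\alpha[0]}(\sturmian{\beta})$ begins with at least $\alpha[0] \ge d_0$ zeros, so reading $0^{d_0}$ off the prefix is consistent with both. To resolve this I would prove the auxiliary statement $P(\gamma,\ell)$: for every directive sequence $\gamma$ and every $\ell$, one has $\sturmian{\gamma}[\ell]=1$ if and only if the least significant digit of $\ostrowski{\gamma}{\ell}$ equals $\gamma[0]$. I would establish $P$ by strong induction on $\ell$; the base case $\ell = 0$ is immediate, since $\sturmian{\gamma}[0]=0$ and the least significant digit is $0 \ne \gamma[0]$. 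For the inductive step, the first two assertions of this lemma (applied to $\gamma$ and $\gamma[1..\infty]$, and hence not using the final sentence, so there is no circularity) place $\ell$ at offset $e_0$ inside the block $\morphism{\gamma[0]}(\sturmian{\gamma[1..\infty]}[j])$, where $e_0$ is the least significant digit and $j < \ell$ is the shifted value; the inequality $j < \ell$ comes from $\abs{X_{i+1}} > \abs{Y_i}$ whenever a digit is nonzero. If $e_0 = 0$ the character is the first letter of its block, hence $0$; if $e_0 > 0$, then condition~4 forces the next digit to be non-maximal, so the hypothesis $P(\gamma[1..\infty],j)$ gives $\sturmian{\gamma[1..\infty]}[j]=0$, the block is $0^{\gamma[0]}1$, and offset $e_0$ reads $0$ exactly when $e_0 < \gamma[0]$. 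This is precisely $P(\gamma,\ell)$. Finally, specializing $P$ to $\beta$ and $m$ and combining with condition~4 for $\alpha$ at index $1$ (which gives $d_1 \ne \alpha[1]=\beta[0]$ whenever $d_0>0$) yields $\sturmian{\beta}[m]=0$, completing the proof.
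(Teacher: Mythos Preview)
Your handling of the first two claims---verifying that $d_k \cdots d_1$ is a valid $\beta$-Ostrowski representation and deriving the prefix identity via Theorem~\ref{thm:decomp} together with $\morphism{\alpha[0]}(Y_i)=X_{i+1}$---is essentially the paper's argument spelled out in more detail. For the final sentence, however, you take a genuinely different route. The paper dispatches it in a few lines by contradiction: if $d_0>0$ and $\sturmian{\beta}[m]=1$, then $\morphism{\alpha[0]}(\sturmian{\beta}[0..m]) = \morphism{\alpha[0]}(\sturmian{\beta}[0..m-1])\,0$ is the length-$(n-d_0+1)$ prefix of $\sturmian{\alpha}$, so $\ostrowski{\beta}{m+1}$ followed by a $0$ is an $\alpha$-Ostrowski representation of $n-d_0+1$; but so is $d_k\cdots d_1 1$ (valid since $d_0>0$ forces $d_1<\alpha[1]$), and these differ in the last digit, contradicting uniqueness. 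Your approach instead proves the stronger auxiliary fact that $\sturmian{\gamma}[\ell]=1$ exactly when the least significant $\gamma$-Ostrowski digit of $\ell$ equals $\gamma[0]$, by a strong induction on $\ell$ that tracks the position through the morphism. This is correct and gives an independently useful description of where the $1$'s sit in $\sturmian{\gamma}$, but it is considerably more work than the paper's uniqueness trick, which extracts just the single bit needed without any induction.
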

\begin{proof}
We leave it to the reader to show that if $d_{k} \cdots d_{0}$ is an $\alpha$-Ostrowski representation then $d_{k} \cdots d_{1}$ is a $\beta$-Ostrowski representation, and conversely, if $d_{k} \cdots d_{1}$ is a $\beta$-Ostrowski representation then $d_{k} \cdots d_{1} 0$ is an $\alpha$-Ostrowski representation. Theorem~\ref{thm:decomp} proves that 
$$\sturmian{\alpha}[0..n-1] = X_{k}^{d_{k}} X_{k-1}^{d_{k-1}} \cdots X_{1}^{d_1} X_{0}^{d_0} = \sturmian{\beta}[0..m-1] 0^{d_0}.$$

Finally, suppose that $d_{0} > 0$ and $\sturmian{\beta}[m] = 1$ for a contradiction. We consider the integer $n - d_{0} + 1$ and its Ostrowski representations. On the one hand, $d_{k} \cdots d_{1} 1$ is a valid Ostrowski representation and $d_{0} - 1$ less than $n$. On the other hand, $$\sturmian{\alpha}[0..n - d_0] = \morphism{\alpha[0]}(\sturmian{\beta}[0..m-1]) 0 = \morphism{\alpha[0]}(\sturmian{\beta}[0..m]),$$
so $\ostrowski{\beta}{m+1}$ followed by $0$ is another Ostrowski representation for $n - d_0 + 1$. This contradicts the uniqueness of Ostrowski representations. 
\end{proof}

Let us continue our earlier example, where we had a directive sequence $\alpha$ beginning $1,3,2,2$. We can compute the first five terms of $\{ q_i \}_{i=0}^{\infty}$. 
\begin{align*}
q_{0} &= \abs{X_0} = 1 \\
q_{1} &= \abs{X_1} = 2 \\
q_{2} &= \abs{X_2} = 7 \\
q_{3} &= \abs{X_3} = 16 \\
q_{4} &= \abs{X_4} = 39.
\end{align*}
In Table~\ref{tab:ostrowski}, we show Ostrowski representations for some small integers.
\begin{table}[h]
\begin{center}
\begin{tabular}{|cc|cc|cc|cc|}
\hline
$n$  & $\ostrowski{\alpha}{n}$ & $n$ & $\ostrowski{\alpha}{n}$ & $n$ & $\ostrowski{\alpha}{n}$ & $n$ & $\ostrowski{\alpha}{n}$ \\
\hline
$0$  &   $0$ & $15$ &  $201$ & $30$ &  $1200$ & $45$ & $10030$ \\
$1$  &   $1$ & $16$ & $1000$ & $31$ &  $1201$ & $46$ & $10100$ \\
$2$  &  $10$ & $17$ & $1001$ & $32$ &  $2000$ & $47$ & $10101$ \\
$3$  &  $11$ & $18$ & $1010$ & $33$ &  $2001$ & $48$ & $10110$ \\
$4$  &  $20$ & $19$ & $1011$ & $34$ &  $2010$ & $49$ & $10111$ \\
$5$  &  $21$ & $20$ & $1020$ & $35$ &  $2011$ & $50$ & $10120$ \\
$6$  &  $30$ & $21$ & $1021$ & $36$ &  $2020$ & $51$ & $10121$ \\
$7$  & $100$ & $22$ & $1030$ & $37$ &  $2021$ & $52$ & $10130$ \\
$8$  & $101$ & $23$ & $1100$ & $38$ &  $2030$ & $53$ & $10200$ \\
$9$  & $110$ & $24$ & $1101$ & $39$ & $10000$ & $54$ & $10201$ \\
$10$ & $111$ & $25$ & $1110$ & $40$ & $10001$ & $55$ & $11000$ \\
$11$ & $120$ & $26$ & $1111$ & $41$ & $10010$ & $56$ & $11001$ \\
$12$ & $121$ & $27$ & $1120$ & $42$ & $10011$ & $57$ & $11010$ \\
$13$ & $130$ & $28$ & $1121$ & $43$ & $10020$ & $58$ & $11011$ \\
$14$ & $200$ & $29$ & $1130$ & $44$ & $10021$ & $59$ & $11020$ \\
\hline
\end{tabular}
\end{center}
\caption{Ostrowski representations where $\alpha = 1,3,2,2,\cdots$}
\label{tab:ostrowski}
\end{table}
By Theorem~\ref{thm:decomp}, we should be able to decompose $\sturmian{\alpha}[0..20]$ as $X_3 X_1^2 X_0$ since $\ostrowski{\alpha}{21} = 1021$.
\begin{align*}
\sturmian{\alpha}[0..20] &= 010101001010100101010 \\
&= (0101010010101001) (01)^2 0 \\
&= X_3 X_1^{2} X_0.
\end{align*}

\section{Local periods in characteristic Sturmian words}

Let $\alpha$ be a directive sequence. Let $p_{\alpha}(n) := p_{\sturmian{\alpha}}(n)$ and $r_{\alpha}(n) := r_{\sturmian{\alpha}}(n)$ be notation for the local period and shortest repetition word for characteristic Sturmian words. In this section we discuss how $p_{\alpha}(n)$ and $r_{\alpha}(n)$ are related to $\ostrowski{\alpha}{n+1}$. 

\begin{definition}
Let $x, y$ be words in $\Sigma^{*}$. Then $x$ is a \emph{conjugate} of $y$ if there exist words $u, v \in \Sigma^{*}$ such that $x = uv$ and $y = vu$.
\end{definition}

\begin{lemma}
\label{lemma:recursivecase}
Let $\alpha$ be a directive sequence, let $\beta := \alpha[1..\infty]$ and $k := \alpha[0]$. Suppose we have integers $m, n \geq 0$ such that $\sturmian{\alpha}[0..n] = \morphism{k}(\sturmian{\beta}[0..m])$. Then 
\begin{enumerate}[(i)]
\item If $u$ is a repetition word in $\sturmian{\beta}$ at position $m$ then there exists a repetition word $v$ in $\sturmian{\alpha}$ at position $n$ such that $\morphism{k}(u)$ is a conjugate of $v$. 
\item If $v$ is a repetition word in $\sturmian{\alpha}$ at position $n$ then there exists a repetition word $u$ in $\sturmian{\beta}$ at position $m$ such that $\morphism{k}(u)$ is a conjugate of $v$.
\end{enumerate}
In particular, $r_{\alpha}(n)$ is a conjugate of $\morphism{k}(r_{\beta}(m))$ when $\sturmian{\alpha}[0..n] = \morphism{k}(\sturmian{\beta}[0..m])$. 
\end{lemma}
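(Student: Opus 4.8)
The plan is to use Proposition~\ref{prop:image}, which gives $\sturmian{\alpha} = \morphism{k}(\sturmian{\beta})$, so that $\sturmian{\alpha}$ decomposes into the blocks $\morphism{k}(0) = 0^{k}1$ and $\morphism{k}(1) = 0$, one block per letter of $\sturmian{\beta}$. The hypothesis $\sturmian{\alpha}[0..n] = \morphism{k}(\sturmian{\beta}[0..m])$ says exactly that these block boundaries line up so that position $n$ is the last letter of the block $\morphism{k}(\sturmian{\beta}[m])$. Writing $N$ for the first position of that block, we have $n = N + k$ when $\sturmian{\beta}[m] = 0$ (the block is $0^{k}1$ and position $n$ is its trailing $1$) and $n = N$ when $\sturmian{\beta}[m] = 1$ (the block is the single letter $0$). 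The central difficulty is that the cut associated with a repetition word at position $n$ lies between positions $n-1$ and $n$, which is internal to the block rather than at the block boundary $N$; reconciling this offset of $k$ positions is precisely what forces the conjugacy in the statement. Throughout I will freely use that a repetition word at a position records a periodicity centred at the associated cut, and that every block of $\sturmian{\alpha}$ begins with a run of at least $k$ zeros.

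For part (i) I would start from a repetition word $u$ at position $m$ in $\sturmian{\beta}$ and apply $\morphism{k}$ to the relation defining $u$. Since $\morphism{k}$ is a morphism it carries this relation to the corresponding relation between $\morphism{k}(u)$ and $\morphism{k}(\sturmian{\beta}[0..m-1]) = \sturmian{\alpha}[0..N-1]$, exhibiting a square of period $P := \abs{\morphism{k}(u)}$ centred at the block boundary $N$. It then remains to slide this square to the right by $n - N = k$ positions so that it is centred at the cut for position $n$. This slide is legitimate because the extra constraints it introduces all read $0 = 0$: the positions $N, \ldots, N+k-1$ are the leading $0^{k}$ of the block at $m$, while the positions $N+P, \ldots, N+P+k-1$ begin the block at index $m + \abs{u}$, which likewise opens with at least $k$ zeros. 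The right half of the slid square is $v := \sturmian{\alpha}[n..n+P-1]$; by the period-$P$ relation it equals $\sturmian{\alpha}[N+k..N+P-1]\,\sturmian{\alpha}[N..N+k-1]$, i.e.\ $\morphism{k}(u)$ with its length-$k$ prefix rotated to the end, so $v$ is a conjugate of $\morphism{k}(u)$ and a repetition word at position $n$. When $\sturmian{\beta}[m] = 1$ the shift is $0$ and $v = \morphism{k}(u)$ directly.

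For part (ii) I would, conversely, take a repetition word $v$ at position $n$, set $P := \abs{v}$, and produce a repetition word $u$ at position $m$ with $\abs{\morphism{k}(u)} = P$; part (i) then shows $\morphism{k}(u)$ is conjugate to the \emph{unique} repetition word of length $P$ at position $n$, which is $v$ itself. To produce $u$ I would use that the only occurrences of the letter $1$ in $\sturmian{\alpha}$ are the final letters of the blocks $\morphism{k}(0) = 0^{k}1$; hence the periodicity recorded by $v$ must carry $1$'s to $1$'s, so $P$ respects the block decomposition and equals $\abs{\morphism{k}(u)}$ for a factor $u$ of $\sturmian{\beta}$ beginning at position $m$, and the periodicity descends to one of $\sturmian{\beta}$ witnessing that $u$ is a repetition word at $m$. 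This descent — a form of recognizability for $\morphism{k}$ — is the main obstacle, as it is exactly where one must rule out periods of $\sturmian{\alpha}$ near the cut that do not arise from $\sturmian{\beta}$, together with the bookkeeping to check that moving the square across the $0^{k}$ preserves the repetition-word property and not merely the raw periodicity.

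Finally, the ``in particular'' clause follows formally from (i) and (ii). Applying (i) to $r_{\beta}(m)$ yields a repetition word at $n$ of length $\abs{\morphism{k}(r_{\beta}(m))}$, so $p_{\alpha}(n) \leq \abs{\morphism{k}(r_{\beta}(m))}$. Applying (ii) to $r_{\alpha}(n)$ yields a repetition word $u$ at $m$ with $\abs{\morphism{k}(u)} = \abs{r_{\alpha}(n)} = p_{\alpha}(n)$; since all repetition words at position $m$ are prefixes of $\sturmian{\beta}[m..\infty]$ and hence nested, $r_{\beta}(m)$ is a prefix of $u$, whence $\abs{\morphism{k}(r_{\beta}(m))} \leq \abs{\morphism{k}(u)} = p_{\alpha}(n)$. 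The two inequalities force $p_{\alpha}(n) = \abs{\morphism{k}(r_{\beta}(m))}$ and, by injectivity of $\morphism{k}$, $u = r_{\beta}(m)$, so $r_{\alpha}(n)$ is a conjugate of $\morphism{k}(r_{\beta}(m))$.
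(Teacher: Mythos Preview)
Your plan tracks the paper's proof closely. Part (i) is essentially the paper's argument: apply $\morphism{k}$ to obtain a repetition word at the block boundary $N$, then shift by $n-N$ positions using that the $k$ letters on either side of the extended range are zeros. One imprecision to fix: the block $\morphism{k}(1)=0$ does \emph{not} begin with $k$ zeros when $k\ge 2$, so ``every block begins with a run of at least $k$ zeros'' is false as stated. What you actually need---and what the paper uses---is that $\sturmian{\beta}$ avoids $11$, so $\morphism{k}(00)$, $\morphism{k}(01)$, $\morphism{k}(10)$ all begin with $0^{k}$; this is enough to guarantee $\sturmian{\alpha}[N+P..N+P+k-1]=0^{k}$.

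Part (ii), however, is only a sketch: you name the recognizability step as ``the main obstacle'' but do not carry it out, and your heuristic (the period carries $1$'s to $1$'s) is specific to the case $\sturmian{\beta}[m]=0$, since when $\sturmian{\beta}[m]=1$ one has $\sturmian{\alpha}[n]=0$ and there is no $1$ at the cut to anchor the argument. The paper fills this gap concretely rather than by appealing to abstract recognizability. In the case $\sturmian{\beta}[m]=0$ it observes that $v$ begins with $\sturmian{\alpha}[n]=1$ and ends with $\sturmian{\alpha}[n-k..n-1]=0^{k}$, hence $v=1\,v'\,0^{k}$; one then checks (by a short case analysis on whether the trailing $0^{k}$ in the right-hand occurrence of $v$ is the start of a $\morphism{k}(0)$ block or a $\morphism{k}(1)$ block followed by zeros) that $v'=\morphism{k}(u')$ for a factor $u'$ of $\sturmian{\beta}$, sets $u:=0u'$, and reads the repetition-word property for $u$ off the suffix occurrence of $v$ in $\sturmian{\alpha}[0..n-1]$. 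The case $\sturmian{\beta}[m]=1$ is analogous with $v=0\,v'\,0^{k}1$ and $u:=1u'0$.

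Two points in your favour: your indirect route to the conjugacy in (ii)---find $u$ with $|\morphism{k}(u)|=|v|$ and then quote (i)---is a clean shortcut the paper does not take, and your derivation of the ``in particular'' clause from (i) and (ii) is more explicit than the paper's.
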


\begin{proof}
We divide into two cases based on whether $\sturmian{\beta}[m]$ is $0$ or $1$. The situation when $\sturmian{\beta}[m] = 0$ is shown in Figure~\ref{fig:case0}, and $\sturmian{\beta}[m] = 1$ is shown in Figure~\ref{fig:case1}. These figures, along with the more detailed diagrams in Figures~\ref{fig:detailed0} and \ref{fig:detailed1} later in the proof, indicate how $\morphism{k}$ maps blocks in $\sturmian{\beta}$ to blocks in $\sturmian{\alpha}$. 

\begin{figure}[h]
\begin{minipage}{0.5\textwidth}
\centering
\begin{tikzpicture}
\edef\sizetape{6.5mm}
\tikzstyle{tape}=[draw,minimum size=\sizetape]

\begin{scope}[start chain=1 going right, node distance=-0.15mm]
	  \node [on chain=1] {$\sturmian{\beta} = $} ;
    \node [on chain=1,tape,minimum width=2cm] (start) {};
    \node [on chain=1,tape] (topzero) {$0$};
    \node [on chain=1,tape,draw=none] {$\ldots$};
\end{scope}

\begin{scope}[start chain=1 going right, node distance=-0.15mm]
    \node [on chain=1,yshift=-2cm,xshift=-.5cm] {$\sturmian{\alpha} = $};
    \node [on chain=1,tape,minimum width=3cm] (imagestart) {};
    \node [on chain=1,tape,minimum width=1cm] (imagezero) {$0^{k} 1$};
    \node [on chain=1,tape,draw=none] {$\ldots$};
\end{scope}

\path[dashed] (start.south west) edge (imagestart.north west)
		  (topzero.south west) edge (imagezero.north west)
		  (topzero.south east) edge (imagezero.north east);
	
\draw [thick, decoration={brace,raise=1mm}, decorate]
      (start.north west) -- node[above,yshift=1mm] {$\sturmian{\beta}[0..m]$} (topzero.north east);
		  
\draw [thick, decoration={brace,mirror,raise=1mm}, decorate] 
			(imagestart.south west) -- node[below,yshift=-1mm] {$\sturmian{\alpha}[0..n]$} (imagezero.south east);
\end{tikzpicture}
\caption{Simple diagram for $\sturmian{\beta}[m] = 0$}
\label{fig:case0}
\end{minipage}
\begin{minipage}{0.5\textwidth}
\centering
\begin{tikzpicture}
\edef\sizetape{6.5mm}
\tikzstyle{tape}=[draw,minimum size=\sizetape]

\begin{scope}[start chain=1 going right, node distance=-0.15mm]
	  \node [on chain=1] {$\sturmian{\beta} = $} ;
    \node [on chain=1,tape,minimum width=2cm] (start) {};
    \node [on chain=1,tape] (topzero) {$0$};
    \node [on chain=1,tape] (topone) {$1$};
    \node [on chain=1,tape,draw=none] {$\ldots$};
\end{scope}

\begin{scope}[start chain=1 going right, node distance=-0.15mm]
    \node [on chain=1,yshift=-2cm,xshift=-.5cm] {$\sturmian{\alpha} = $};
    \node [on chain=1,tape,minimum width=3cm] (imagestart) {};
    \node [on chain=1,tape,minimum width=1cm] (imagezero) {$0^{k} 1$};
    \node [on chain=1,tape] (imageone) {$0$};
    \node [on chain=1,tape,draw=none] {$\ldots$};
\end{scope}

\path[dashed] (start.south west) edge (imagestart.north west)
		  (topzero.south west) edge (imagezero.north west)
		  (topone.south west) edge (imageone.north west)
		  (topone.south east) edge (imageone.north east);
	
\draw [thick, decoration={brace,raise=1mm}, decorate]
      (start.north west) -- node[above,yshift=1mm] {$\sturmian{\beta}[0..m]$} (topone.north east);
		  
\draw [thick, decoration={brace,mirror,raise=1mm}, decorate] 
			(imagestart.south west) -- node[below,yshift=-1mm] {$\sturmian{\alpha}[0..n]$} (imageone.south east);
\end{tikzpicture}
\caption{Simple diagram for $\sturmian{\beta}[m] = 1$}
\label{fig:case1}
\end{minipage}
\end{figure}

\begin{description}
\item[Case] $\sturmian{\beta}[m] = 0$: \\
Clearly $\sturmian{\alpha}[0..n]$ ends with $0^{k} 1 = \morphism{k}(0)$ since $\sturmian{\beta}[m] = 0$. This gives us Figure~\ref{fig:case0}. 
\begin{enumerate}[(i)]
\item Let $u$ be a repetition word in $\sturmian{\beta}$ at position $m$. If $\sturmian{\beta}[0..m-1]$ is a suffix of $u$ then certainly $\sturmian{\alpha}[0..n-1] = \morphism{k}(\sturmian{\beta}[0..m-1])$ is a suffix of $\morphism{k}(u)$.
\begin{figure}[!h]
\centering 
\begin{tikzpicture}[scale=0.5]
\edef\sizetape{6.5mm}
\tikzstyle{tape}=[draw,minimum size=\sizetape]

\begin{scope}[start chain=1 going right, node distance=-0.15mm]
	  \node [on chain=1] {$\sturmian{\beta} = $} ;
    \node [on chain=1,tape,minimum width=1.5cm] (start) {};
    \node [on chain=1,tape] (first0) {$0$};
    \node [on chain=1,tape,minimum width=1.5cm] {$u'$};
    \node [on chain=1,tape] (second0) {$0$};
    \node [on chain=1,tape,minimum width=1.5cm] {$u'$};
    \node [on chain=1,tape] (question) {$?$};
    \node [on chain=1,tape,draw=none] {$\ldots$};
\end{scope}

\begin{scope}[start chain=1 going right, node distance=-0.15mm]
    \node [on chain=1,yshift=-2cm,xshift=-1cm] {$\sturmian{\alpha} = $};
    \node [on chain=1,tape,minimum width=2cm] (imagestart) {};
    \node [on chain=1,tape] (imagefirstL0) {$0^{k}$};
    \node [on chain=1,tape] (imagefirstR0) {$1$};
    \node [on chain=1,tape,minimum width=2cm] {$v'$};
    \node [on chain=1,tape] (imagesecondL0) {$0^{k}$};
    \node [on chain=1,tape] (imagesecondR0) {$1$};
    \node [on chain=1,tape,minimum width=2cm] {$v'$};
    \node [on chain=1,tape] (imagequestion) {$0^{k}$};
    \node [on chain=1,tape,draw=none] {$\ldots$};
\end{scope}

\path[dashed] (start.south west) edge (imagestart.north west)
		  (first0.south west) edge (imagefirstL0.north west)
		  (first0.south east) edge (imagefirstR0.north east)
		  (second0.south west) edge (imagesecondL0.north west)
		  (second0.south east) edge (imagesecondR0.north east)
		  (question.south west) edge (imagequestion.north west);

\draw [thick, decoration={ brace, mirror, raise=1mm }, decorate] 
			(imagefirstR0.south west) -- node[below,yshift=-1mm] {$v$} (imagesecondR0.south west);		
\draw [thick, decoration={ brace, mirror, raise=1mm }, decorate] 
			(imagesecondR0.south west) -- node[below,yshift=-1mm] {$v$} (imagequestion.south east);
\draw [thick, decoration={ brace, raise=1mm }, decorate] 
			(first0.north west) -- node[above,yshift=1mm] {$u$} (second0.north west);
\draw [thick, decoration={ brace, raise=1mm }, decorate] 
			(second0.north west) -- node[above,yshift=1mm] {$u$} (question.north west);
\end{tikzpicture}
\caption{Detailed diagram for $\sturmian{\beta}[m] = 0$}
\label{fig:detailed0}
\end{figure}

Suppose that $u$ is a suffix of $\sturmian{\beta}[0..m-1]$. Since $\sturmian{\beta}[m] = 0$ we know $u$ begins with $0$ and write $u = 0u'$. Since $u'$ is a prefix of $\sturmian{\beta}[m+1..\infty]$, we see that $v' := \morphism{k}(u')$ is a prefix of $\sturmian{\alpha}[n+1..\infty]$. The prefix $u'$ in $\sturmian{\beta}[m+1..\infty]$ is followed by $00$, $01$ or $10$. Since $\morphism{k}(00)$, $\morphism{k}(01)$ and $\morphism{k}(10)$ all start with at least $k$ zeros, we deduce that $v'$ (as it occurs at the beginning of $\sturmian{\alpha}[n+1..\infty]$) is followed by $k$ zeros. Thus, $v := 1 v' 0^{k}$ is a prefix of $\sturmian{\alpha}[n..\infty]$. From the other occurrence of $u$ (as a suffix of $\sturmian{\beta}[0..m-1]$) we deduce that $1 v' 0^{k}$ is also a suffix of $\sturmian{\alpha}[0..n-1]$. We conclude that $v$ is a repetition word in $\sturmian{\alpha}$ at position, and note that $v = 1 v' 0^{k}$ is a conjugate of $0^{k} 1 v' = \morphism{k}(0u') = \morphism{k}(u)$, as required. 

\item Let $v$ be a repetition word in $\sturmian{\alpha}$ at position $n$. The $1$ at position $n$ is preceded by $k$ zeros. Hence, $\sturmian{\alpha}[0..n-1]$ ends in $0^{k}$, so $v$ ends in $0^{k}$. Clearly $v$ begins with $1$, let $v'$ be such that $v = 1 v' 0^{k}$. We do not know whether the trailing $0^{k}$ is the beginning of $\morphism{k}(0)$ or $\morphism{k}(10)$, but in either case $v'$ is $\morphism{k}(u')$ for $u'$ a factor of $\sturmian{\beta}$.

If $\sturmian{\alpha}[0..n-1]$ is a proper suffix of $v$ then $\sturmian{\alpha}[0..n-k-1]$ is a suffix of $v'$. Then $\sturmian{\beta}[0..m-1]$ is a suffix of $u'$, and hence $u := 0u'$ is a repetition word in $\sturmian{\beta}$ at position $m$ such that $v$ is a conjugate of $\morphism{k}(u)$. 

Otherwise, $v$ is a suffix of $\sturmian{\alpha}[0..n-1]$. The trailing $0^{k}$ in this occurrence of $v$ is in the image of $\sturmian{\beta}[m] = 0$. The remaining $1 v'$ must be  preceded by $0^{k}$, and then $0^{k} 1 v'$ is the image of $0u'$, which occurs as a suffix of $\sturmian{\beta}[0..m-1]$. Now we have the situation in Figure~\ref{fig:detailed0}. It follows that $u := 0u'$ is a repetition word, and $v = 1 v' 0^{k}$ is a conjugate of $\morphism{k}(u) = 0^{k} 1 v'$.
\end{enumerate}
\item[Case] $\sturmian{\beta}[m] = 1$: \\
The characteristic Sturmian words we consider start with $0$, so $m \neq 0$. Since $\sturmian{\beta}$ does not contain the factor $11$, we know $\sturmian{\beta}[m-1] = 0$. Therefore $\sturmian{\alpha}[0..n]$ ends in $\morphism{k}(01) = 0^{k} 1 0$, as shown in Figure~\ref{fig:case1}. 

\begin{figure}[!h]
\centering 
\begin{tikzpicture}[scale=0.5]
\edef\sizetape{6.5mm}
\tikzstyle{tape}=[draw,minimum size=\sizetape]

\begin{scope}[start chain=1 going right, node distance=-0.15mm]
	  \node [on chain=1] {$\sturmian{\beta} = $} ;
    \node [on chain=1,tape,minimum width=1.5cm] (start) {};
    \node [on chain=1,tape] (first1) {$1$};
    \node [on chain=1,tape,minimum width=1.5cm] {$u'$};
    \node [on chain=1,tape] (first0) {$0$};
    \node [on chain=1,tape] (second1) {$1$};
    \node [on chain=1,tape,minimum width=1.5cm] {$u'$};
    \node [on chain=1,tape] (second0) {$0$};
    \node [on chain=1,tape,draw=none] {$\ldots$};
\end{scope}

\begin{scope}[start chain=1 going right, node distance=-0.15mm]
    \node [on chain=1,yshift=-2cm,xshift=-1cm] {$\sturmian{\alpha} = $};
    \node [on chain=1,tape,minimum width=2cm] (imagestart) {};
    \node [on chain=1,tape] (imagefirst1) {$0$};
    \node [on chain=1,tape,minimum width=2cm] {$v'$};
		\node [on chain=1,tape,minimum width=1cm] (imagefirst0) {$0^{k}1$};
    \node [on chain=1,tape] (imagesecond1) {$0$};
    \node [on chain=1,tape,minimum width=2cm] {$v'$};
    \node [on chain=1,tape,minimum width=1cm] (imagesecond0) {$0^{k}1$};
    \node [on chain=1,tape,draw=none] {$\ldots$};
\end{scope}

\path[dashed] (start.south west) edge (imagestart.north west)
			(first1.south west) edge (imagefirst1.north west)
			(first1.south east) edge (imagefirst1.north east)
		  (first0.south west) edge (imagefirst0.north west)
		  (second1.south west) edge (imagesecond1.north west)
		  (second1.south east) edge (imagesecond1.north east)
		  (second0.south west) edge (imagesecond0.north west)
		  (second0.south east) edge (imagesecond0.north east);

\draw [thick, decoration={ brace, mirror, raise=1mm }, decorate] 
			(imagefirst1.south west) -- node[below,yshift=-1mm] {$v$} (imagefirst0.south east);
\draw [thick, decoration={ brace, mirror, raise=1mm }, decorate] 
		  (imagesecond1.south west) -- node[below,yshift=-1mm] {$v$} (imagesecond0.south east);
\draw [thick, decoration={ brace, raise=1mm }, decorate]
		  (first1.north west) -- node[above,yshift=1mm] {$u$} (first0.north east);
\draw [thick, decoration={ brace, raise=1mm }, decorate]
		  (second1.north west) -- node[above,yshift=1mm] {$u$} (second0.north east);
\end{tikzpicture}
\caption{Detailed diagram for $\sturmian{\beta}[m] = 1$}
\label{fig:detailed1}
\end{figure}

\begin{enumerate}[(i)]
\item Suppose $u$ is a repetition word in $\sturmian{\beta}$ at position $m$, and let $v := \morphism{k}(u)$. We know that $\morphism{k}(\sturmian{\beta}[0..m-1]) = \sturmian{\alpha}[0..n-1]$ and $\morphism{k}(\sturmian{\beta}[m..\infty]) = \sturmian{\alpha}[n..\infty]$. Thus, 
\begin{itemize}
\item $v$ is a prefix of $\sturmian{\alpha}[n..\infty]$ if $u$ is a prefix of $\sturmian{\beta}[m..\infty]$
\item $v$ is a suffix of $\sturmian{\alpha}[0..n-1]$ if $u$ is a suffix of $\sturmian{\beta}[0..m-1]$
\item $\sturmian{\alpha}[0..n-1]$ is a suffix of $v$ if $\sturmian{\beta}[0..m-1]$ is a suffix of $u$.
\end{itemize} 
It follows that $v$ is a repetition word in $\sturmian{\alpha}$ at position $n$. 
\item Suppose $v$ is a repetition word in $\sturmian{\alpha}$ at position $n$. We know $v$ starts with $0$ since $\sturmian{\alpha}[n] = 0$, and $v$ ends with $1$ since $\sturmian{\alpha}[n-1] = 1$, therefore $v = 0 v' 0^{k} 1$ for some $v'$. Then $v' = \morphism{k}(u')$ for some $u'$, and we define $u := 1 u' 0$ so that
$$
\morphism{k}(u) = \morphism{k}(1 u' 0) = 0 v' 0^{k} 1 = v.
$$
It is also clear that 
\begin{itemize}
\item $u$ is a prefix of $\sturmian{\beta}[m..\infty]$
\item $u$ is a suffix of $\sturmian{\beta}[0..m-1]$ if $v$ is a suffix of $\sturmian{\alpha}[0..n-1]$
\item $\sturmian{\beta}[0..m-1]$ is a suffix of $u$ if $\sturmian{\alpha}[0..n-1]$ is a suffix of $v$,
\end{itemize} 
so we conclude that $u$ is a repetition word in $\sturmian{\beta}$ at position $m$. 
\end{enumerate}
\end{description} 
\end{proof}

\begin{theorem}
\label{theorem:main}
Let $\alpha$ be a directive sequence and let $\beta := \alpha[1..\infty]$. Let $n \geq 0$ be a nonnegative integer. Let $t$ be the number of trailing zeros in $\ostrowski{\alpha}{n+1}$. Then $r_{\alpha}(n)$ is a conjugate of $X_{t}$, except when all of the following conditions are met:
\begin{itemize}
\item The last nonzero digit in $\ostrowski{\alpha}{n+1}$ is $1$.
\item $\ostrowski{\alpha}{n+1}$ contains at least two nonzero digits.
\item The last two nonzero digits of $\ostrowski{\alpha}{n+1}$ are separated by an even number of zeros.
\end{itemize}
When $\ostrowski{\alpha}{n+1}$ meets these conditions, then $r_{\alpha}(n)$ is a conjugate of $X_{t+1}$ .
\end{theorem}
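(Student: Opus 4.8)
The plan is to prove the statement by induction on $t$, the number of trailing zeros of $\ostrowski{\alpha}{n+1}$, establishing it simultaneously for every directive sequence. Throughout write $k := \alpha[0]$ and let $X_i^{\beta}$ denote the $i$th characteristic block of $\beta := \alpha[1..\infty]$, so that the standard sequences of $\alpha$ and $\beta$ are related by $\morphism{k}(X_i^{\beta}) = X_{i+1}$, directly from the definition of the standard sequence. Two preliminary observations drive the argument: first, $\morphism{k}$ sends conjugate words to conjugate words, since if $x = uv$ and $y = vu$ then $\morphism{k}(x) = \morphism{k}(u)\morphism{k}(v)$ and $\morphism{k}(y) = \morphism{k}(v)\morphism{k}(u)$; second, deleting a trailing zero from an Ostrowski representation leaves the last nonzero digit, the number of nonzero digits, and the gap between the last two nonzero digits all unchanged, so it preserves the three exceptional conditions while decreasing $t$ by one.

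For the \textbf{inductive step} ($t \geq 1$, i.e.\ the last digit $d_0$ of $\ostrowski{\alpha}{n+1}$ is zero), I would apply Lemma~\ref{lemma:image} to $n+1$: since $d_0 = 0$ this yields $\sturmian{\alpha}[0..n] = \morphism{k}(\sturmian{\beta}[0..m])$ for an $m$ with $\ostrowski{\beta}{m+1}$ equal to $\ostrowski{\alpha}{n+1}$ with its trailing zero removed. Lemma~\ref{lemma:recursivecase} then gives that $r_{\alpha}(n)$ is a conjugate of $\morphism{k}(r_{\beta}(m))$. By the second observation $\ostrowski{\beta}{m+1}$ has exactly $t-1$ trailing zeros and the same exceptional status as $\ostrowski{\alpha}{n+1}$, so the induction hypothesis makes $r_{\beta}(m)$ a conjugate of $X_{t-1}^{\beta}$ in the non-exceptional case and of $X_{t}^{\beta}$ in the exceptional case. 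Applying $\morphism{k}$, and using that it preserves conjugacy together with $\morphism{k}(X_{j}^{\beta}) = X_{j+1}$, I conclude that $r_{\alpha}(n)$ is a conjugate of $X_{t}$ or $X_{t+1}$ exactly as required.

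The \textbf{base case} is $t = 0$, where $d_0 > 0$; here Lemma~\ref{lemma:recursivecase} does not apply and I must examine the word around position $n$ directly. By Lemma~\ref{lemma:image}, $\sturmian{\alpha}[0..n] = \morphism{k}(\sturmian{\beta}[0..m-1])\,0^{d_0}$ with $\sturmian{\beta}[m] = 0$, so position $n$ sits at the end of a run of zeros and a length-$1$ repetition word (a conjugate of $X_0 = 0$) exists iff $\sturmian{\alpha}[n-1] = 0$. If $d_0 \geq 2$ this holds, giving the non-exceptional conclusion. If $d_0 = 1$ then $\sturmian{\alpha}[n-1]$ is the last letter of $\morphism{k}(\sturmian{\beta}[0..m-1])$, which is $0$ exactly when $\sturmian{\beta}[m-1] = 1$; and I would show that $\sturmian{\beta}[m-1]$ equals the last letter of $X_{s-1}^{\beta}$, where $s$ is the position of the second-to-last nonzero digit, and that $X_j^{\beta}$ ends in $0$ iff $j$ is even. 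Hence a length-$1$ repetition word exists iff the gap $s-1$ is odd, matching the non-exceptional case; when the gap is even one checks that $\sturmian{\alpha}$ reads $0^{k}1$ on both sides of position $n$, producing the square $(0^{k}1)^{2}$ and a shortest repetition word $0^{k}1 = X_1$, which is the exceptional conclusion. The degenerate subcase where $d_0 = 1$ is the only nonzero digit ($n = 0$) is non-exceptional and returns $X_0$ directly.

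The \textbf{main obstacle} is the base case, and specifically the $d_0 = 1$ subcase: one must translate the purely combinatorial parity condition on the Ostrowski digits into the single letter $\sturmian{\beta}[m-1]$, and then verify that when no length-$1$ repetition word exists the shortest one has length exactly $k+1$. The latter requires ruling out every repetition word of length between $2$ and $k$, which follows because the letter $1$ immediately to the left of position $n$ cannot be matched inside the block $0^{k}$ occurring to its right; assembling these facts is where the care is needed, whereas the inductive step is essentially bookkeeping built on Lemmas~\ref{lemma:image} and~\ref{lemma:recursivecase}.
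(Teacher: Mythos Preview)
Your proposal is correct and follows essentially the same route as the paper: induction on $t$, with the inductive step handled by Lemmas~\ref{lemma:image} and~\ref{lemma:recursivecase} together with the observation that stripping a trailing zero preserves the exceptional conditions, and the base case $t=0$ resolved by a parity analysis of the position of the second-to-last nonzero digit. The only cosmetic difference is that in the base case you read off the last letter via $\sturmian{\beta}[m-1]$ and $X_{s-1}^{\beta}$, whereas the paper works directly with $X_{\ell}$ in $\sturmian{\alpha}$; since $\morphism{k}(X_{s-1}^{\beta}) = X_{s}$ these are the same computation, and your explicit remark about ruling out repetition words of length $2,\ldots,k$ in the exceptional subcase is in fact slightly more detailed than the paper's treatment.
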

\begin{proof}
Let $d_{k} \cdots d_{0} = \ostrowski{\alpha}{n+1}$ be the Ostrowski representation of $n+1$. Let $t$ be the number of trailing zeros in $\ostrowski{\alpha}{n+1}$. We use induction on $t$ to prove that $r_{\alpha}(n)$ is a conjugate of $X_{t}$, or under the conditions described above, a conjugate of $X_{t+1}$.

\begin{description}
\item[Base case $t = 0$:]
Since $n+1 > 0$, we have $d_0 > 0$. By Theorem~\ref{thm:decomp}, we have 
$$\sturmian{\alpha}[0..n] = X_{k}^{d_{k}} \cdots X_{0}^{d_0}.$$
If $d_{0} \geq 2$ then we are done since $\sturmian{\alpha}[0..n]$ ends in $00$. Hence $\sturmian{\alpha}[n-1] = \sturmian{\alpha}[n] = 0$ and $r_{\alpha}(n) = 0 = X_0$ is the shortest repetition word at position $n$. Let us assume without loss of generality that $d_{0} = 1$. 

According to the induction hypothesis, the second last nonzero digit in $\ostrowski{\alpha}{n+1}$ becomes relevant when the last nonzero digit is 1. If $d_0$ is the only nonzero digit, then $n = 0$ and $r_{\alpha}(0)$ is clearly $\sturmian{\alpha}[0] = 0$. Otherwise, pick $\ell > 0$ minimal such that $d_{\ell} \neq 0$. That is, let $d_{\ell}$ be the second last nonzero digit. Note that by Theorem~\ref{thm:decomp}, the word $\sturmian{\alpha}[0..n-1]$ ends in $X_{\ell}$.

If $\ell$ is even then $X_{\ell}$ ends in $0$ (by a simple induction), so $\sturmian{\alpha}[n-1] = 0$ and it follows that $r_{\alpha}(n) = 0$. When $\ell$ is odd, the word $X_{\ell}$ ends in $X_1$ and $X_1$ ends in $1$. It follows that
$$\sturmian{\alpha}[0..n-1] = \morphism{\alpha[0]}(\sturmian{\beta}[0..m-1])$$
for some $m \geq 0$. We claim that $\sturmian{\beta}[m] = 0$, since otherwise
$$\morphism{\alpha[0]}(\sturmian{\beta}[0..m]) = \sturmian{\alpha}[0..n]$$
so Lemma~\ref{lemma:image} states that $\ostrowski{\alpha}{n+1}$ ends in $0$, contradicting $d_0 = 1$. Then $\sturmian{\alpha}[n..\infty]$ begins with $\morphism{\alpha[0]}(\sturmian{\alpha}[m]) = X_1$, so $r_{\alpha}(n) = X_1$. 

\item[Inductive step $t > 0$:]
We note that removing (or adding) trailing zeros from $\ostrowski{\alpha}{n+1}$ does not change whether it satisfies all three conditions in the theorem. We will assume that $\ostrowski{\alpha}{n+1}$ does not meet the conditions, since the proof is nearly identical if it does meet the conditions. 

Let $\{ X_i \}_{i=0}^{\infty}$ and $\{ Y_i \}_{i=0}^{\infty}$ be standard sequences corresponding to the directive sequences $\alpha$ and $\beta$ respectively. Lemma~\ref{lemma:image} states that $\sturmian{\alpha}[0..n] = \morphism{\alpha[0]}(\sturmian{\beta}[0..m])$ where $m \geq 0$ is such that $$\ostrowski{\beta}{m+1} = d_{k} \cdots d_{1}.$$
Note that $d_{k} \cdots d_{1}$ has $t-1$ trailing zeros, so $r_{\beta}(m)$ is a conjugate of $Y_{t-1}$ by induction. By Lemma~\ref{lemma:recursivecase}, $r_{\alpha}(n)$ is a conjugate of $\morphism{\alpha[0]}(Y_{t-1}) = X_{t}$, completing the proof. 
\end{description}
\end{proof}

Let us continue our example with a directive sequence $\alpha$ starting with $1,3,2,2$. Recall that
\begin{align*}
\sturmian{\alpha} &= 01010 \, 10010 \, 10100 \, 10101 \, 01001 \, 01010 \, 01010 \, 1010 \cdots
\end{align*}
Consider the shortest repetition words at positions 23 through 26. These positions happen to give illustrative examples of the theorem. 
\begin{align*}
r_{\alpha}(23) &= 0                & X_0 &= 0                & \ostrowski{\alpha}{24} = 1101 \\
r_{\alpha}(24) &= 1010100          & X_2 &= 0101010          & \ostrowski{\alpha}{25} = 1110 \\
r_{\alpha}(25) &= 01               & X_1 &= 01               & \ostrowski{\alpha}{26} = 1111 \\
r_{\alpha}(26) &= 10               & X_1 &= 01               & \ostrowski{\alpha}{27} = 1120
\end{align*}
When $n = 23$, there are no trailing zeros in $\ostrowski{\alpha}{24} = 1101$ and we have an odd number of zeros between the last two nonzero digits. Hence, $r_{\alpha}(23)$ is a conjugate of $X_0 = 0$. Compare this to $n = 25$, where $\ostrowski{\alpha}{26} = 1111$ also has no trailing zeros, but the last two ones are adjacent, so $r_{\alpha}(25)$ is a conjugate of $X_1$. We are in a similar situation for $n = 24$, but with an trailing zero so $r_{\alpha}(24)$ is a conjugate of $X_2$. Finally, consider $n = 26$ where the last two nonzero digits are adjacent and we have a trailing zero, like $n = 24$, but the last nonzero digit is not a one. It follows that $r_{\alpha}(26)$ is a conjugate of $X_1$. Although $r_{\alpha}(25)$ and $r_{\alpha}(26)$ are both conjugates of $X_1$, they are not the same. 

\section{Open Problems and Further Work}

It would be interesting to generalize the result to two-sided Sturmian words, with an appropriate definition for local period in two-sided words. We might define a repetition word in $w \in {}^\omega \Sigma^\omega$ at position $n$ as a word that is simultaneously a prefix of $w[n..\infty]$ and a suffix of $w[-\infty..n-1]$. Note that if we extend a characteristic Sturmian word $\sturmian{\alpha}$ to a two-sided word $w$, the local period at position $n$ in $w$ may not be the same as the local period at position $n$ in $\sturmian{\alpha}$. 

Our main result is about the local period and the shortest repetition word, but Lemma~\ref{lemma:recursivecase} applies to all repetition words at a specific position. Is it possible to extend our result to all repetition words, not just the shortest repetition word? Patterns in the lengths of repetition words for the Fibonacci word suggest that it is possible, but we do not have a specific conjecture.

\end{document}